\newtheorem{remark}{Remark}
\newtheorem{lemma}{Lemma}
\newtheorem{assumption}{Assumption}
\newtheorem{proposition}{Proposition}
\newtheorem{proof}{\textbf{Proof}}
\def\begcen{\begin{center}}
\def\endcen{\end{center}}
\newcommand{\col}{\mbox{col}}
\def\diag{\mbox{diag}}
\def\bw{{\bf w}}
\def\bx{{\bf x}}
\def\bv{{\bf v}}
\def\bi{{\bf i}}
\def\bc{{\bf c}}
\def\calm{{\mathcal M}}
\def\call{{\mathcal L}}
\def\hal{\frac{1}{2}}
\def\L2e{{\mathcal L}_{2e}}
\def\rea{\mathbb{R}}
\def\diag{\mbox{diag}}
\def\et{\epsilon_t}
\def\l2{{\mathcal L}_2}
\def\l2e{{\cal L}_{2e}}
\def\rea{\mathbb{R}}
\def\diag{\mbox{diag}}
\def\begequarr{\begin{eqnarray}}
\def\endequarr{\end{eqnarray}}
\def\begequarrs{\begin{eqnarray*}}
\def\endequarrs{\end{eqnarray*}}
\def\begarr{\begin{array}}
\def\endarr{\end{array}}
\def\begequ{\begin{equation}}
\def\endequ{\end{equation}}
\def\lab{\label}
\def\begdes{\begin{description}}
\def\enddes{\end{description}}
\def\begenu{\begin{enumerate}}
\def\begite{\begin{itemize}}
\def\endite{\end{itemize}}
\def\endenu{\end{enumerate}}
\def\lef[{\left[\begin{array}}
\def\rig]{\end{array}\right]}
\def\qed{\hfill$\Box$}
\def\begcen{\begin{center}}
\def\endcen{\end{center}}
\def\begrem{\begin{remark}\rm}
\def\endrem{\end{remark}}
\def\TAC{{\it IEEE Trans. Autom. Control}}
\def\AUT{{\it Automatica}}
\def\ACC{{\it American Control Conf.}}
\def\begmat#1{\begin{bmatrix}#1\end{bmatrix}}
\def\begali#1{\begin{align}{#1}\end{align}}
\def\begalis#1{\begin{align*}{#1}\end{align*}}
\begin{document}

\begin{frontmatter}

\title{A High Performance Globally Exponentially Stable Sensorless Observer for the IPMSM: Theoretical and Experimental Results} 
\thanks[footnoteinfo]{An abridged version \cite{IFAC} of this paper will be presented at the 22nd IFAC World Congress, 2023. This work was partially supported by the National Research Foundation of Korea (NRF) grant funded by the Korean government (MSIT) (NRF-2021R1I1A3059676). Corresponding author: Bowen Yi. }
\author[USYD]{Bowen Yi}\ead{bowen.yi@sydney.edu.au}, 
\author[ITAM]{Romeo Ortega}\ead{romeo.ortega@itam.mx},
\author[HU]{Jongwon Choi}\ead{jongwon@hnu.kr},
\author[POSTECH]{Kwanghee Nam}\ead{kwnam@postech.ac.kr}

\address[USYD]{Australian Centre for Field Robotics, The University of Sydney, Sydney, NSW 2006, Australia}
\address[ITAM]{Departamento Acad\'{e}mico de Sistemas Digitales, ITAM, Progreso Tizap\'an 1, Ciudad de M\'exico, 04100, M\'{e}xico}
\address[HU]{Department of Electrical and Electronic Engineering, Hannam University, Daejeon 34430, Korea}
\address[POSTECH]{Department of Electrical Engineering, Pohang University of Science and Technology (POSTECH), Pohang 790-784, Korea}

\begin{keyword}                           
Observers Design; Nonlinear Systems; Motor Drive             
\end{keyword}                             

\begin{abstract}  
In a recent paper \cite{ORTetal} the authors proposed the first solution to the problem of designing a {\em globally exponentially stable} (GES) flux observer for the interior permanent magnet synchronous motor. However, the establishment of the stability proof relies on the assumption that the adaptation gain is sufficiently {\em small}---a condition that may degrade the quality of the transient behavior. In this paper we propose a new GES flux observer that overcomes this limitation ensuring a high performance behavior. The design relies on the use of a novel theoretical tool---the generation of a {\em ``virtual" invariant manifold}---that allows the use of the more advanced Kreisselmeier's regression extension estimator, instead of a simple gradient descent one. We illustrate its superior transient behavior via extensive simulations and {\em experiments}. 
\end{abstract}
\end{frontmatter}

%
\section{Introduction}
\label{sec1}
%

%


Sensorless control is one of the most important topics in the field of motor drives, for which we assume only access to electrical coordinates, i.e., the stator currents and voltages. It is of paramount importance in industry since sensorless control of motors can reduce hardware complexity, improve reliability, and achieve cost reduction \cite{HOL,NAM}. Meanwhile, this problem has received particular attention from the automatic control community due to its theoretical interests, i.e., the dynamical model of the permanent magnet synchronous motor (PMSM) does not belong to any observable canonical forms that have been studied comprehensively \cite{BER}. To be precise, the main constructive challenges arise from the highly nonlinear output functions.

It is widely recognized that the centrality of sensorless control is the high-performance algorithm to estimate angular positions. The existing estimation algorithms can be broadly classified into two classes, which make use of high-frequency and fundamental components in stator currents, respectively. The former is known as the signal injection approach, which is applicable to the low speed region; and the latter is more suitable at middle and high speeds \cite{NAM}. In this paper, we are limited to the (fundamental frequency) model-based approaches. Indeed, both the back-emf and flux contain the information of angular positions implicitly, thus providing distinct technical routes to reconstruct mechanical coordinates. A salient feature of flux is its \emph{constant} magnitude regardless of rotating speeds, and this fact triggers intensive research activity on \emph{flux observers} for PMSMs \cite{BOBetal,BOLetal,ORTetalTCST} in both the power electronics and control communities.

An important contribution to motor flux observers was reported in \cite{ORTetalTCST}, where it was observed that the motor dynamics satisfies an algebraic relation that can be used to formulate an optimization criterion whose gradient can be explicitly computed for its use in a gradient descent-based observer. This observation led to the design of the first locally stable observer whose practical viability was established in \cite{LEEetal}. Later on, this design was rendered global (via convexification) in \cite{MALetal} and then combined with an adaptive mechanism in \cite{BERPRA}, both achieving global asymptotic stability (GAS). Following the research line initiated in \cite{POUPRAORT}, a Kazantzis-Kravaris-Luenberger observer was recently proposed in \cite{BERPRAtac}, which is capable to deal with the scenario of unknown resistance. Note that above-mentioned methods are tailored for the \emph{surface-mounted} PMSM. However, in recent years the \emph{interior} permanent magnet synchronous motor (IPMSM) has become widely popular in a variety of industrial applications due to its high power density, cheaper cost and reluctance torque. The dynamical model of the IPMSM is far more complicated than the one of the surface-mounted PMSM. In particular, the equivalent of the algebraic relation mentioned above, which was exploited in \cite{ORTetalTCST}, depends now on the rotor angle---hence been unsuitable for its use in a gradient descent based observer. For this reason the design of a globally stable flux observer for IPMSMs was considered to be a wide open problem in \cite[Section VI]{ORTetalTCST}, a fact also endorsed in \cite[Section VII]{BERPRAtac}.

This problem has been recently solved by the authors in \cite{ORTetal}. A key observation to establish this result was to formulate the estimation objective in terms of the active, instead of the magnetic, flux. A similar shift of the problem formulation was (unknowingly) pursued in \cite{MALPRA}, leading also to a globally stable design. It is also worth noting the work of \cite{VERetal} where local convergence is reported for a scheme incorporating parameter adaptation. As of this date, the observer design reported in the literature with the strongest stability property, namely GES,  is the one in \cite{ORTetal}. Its basic idea is to obtain a {\em nonlinear} regression equation on the active flux---which can be viewed as a perturbed linear regressor---by adopting the filtering technique in \cite{CHOetal}. A potential drawback of this scheme is that to prove stability, we require a technical condition---selecting the adaptation gain {\em sufficiently small}. Such low adaptation gain design may limit the estimation performance, particularly in the transient stage. The aim of this paper is to propose an alternative solution, where we do not impose a restriction on the size of the adaptation gain, addressing in this way the issue of (potentially poor) transient behaviors.

In order to be able to design our new globally exponentially convergent flux observer, we add to the filtering technique of \cite{CHONAM,ORTetal}, a novel mathematical concept: the creation of a {\em virtual invariant manifold}. Thanks to this new technique we are able to replace the gradient-based observer design by the high performance Kreisselmeier's regression extension (KRE)-based estimator, reported in \cite{KRE} for the design of adaptive observers for linear time-invariant (LTI)  systems. We notice that the KRE-based estimator was  recently used for electromechanical systems in \cite{Li}---see \cite{ARAetal} for a detailed analysis of the properties of this estimator. To evaluate performance of the proposed sensorless observer for IPMSMs we present numerous simulation and {\em experimental} results.

\begin{figure}
\fbox{ \parbox { .92\linewidth}
{
\begin{center}
 {\bf Nomenclature}
\end{center}
\vspace{0.2cm}
  \renewcommand\arraystretch{1.4}
\small
\begin{tabular}{ll}
$\alpha\beta$ & Stationary axis reference frame quantities\\
$\bv,\bi \in \rea^2$ & Stator voltage and current [V, A] \\
$\lambda \in \rea^2$ & Stator flux [Wb]\\
$\bx \in \rea^2$ & Active flux [Wb] \\
$\theta \in {\mathbb S}$ & Rotor flux angle [rad] \\
$R$    & Stator winding resistance [$\Omega$]\\
$\psi_m$ & PM flux linkage constant  \\
$L_d,L_q$ & $d$ and $q$-axis inductances [H] \\
$L_0$ & Inductance difference $L_0:= L_d - L_q$ [H] \\
$L_s$ & {Averaged inductance} $L_s:= {L_d + L_q \over 2}$ [H] \\
$p$ & Differential operator $p:= {d \over dt}$\\
$G(p)[w]$ & Action of $G(p) \in \rea(p)$ on a signal $w(t)$\\
\end{tabular}
}}
\end{figure}

%
\section{Motor Model and Observer Problem Formulation}
\label{sec2}
%
According to Faraday's Law, the electrical dynamics (in the stationary $\alpha\beta$ frame) is given by
\begequ
\label{ipmsm1}
\begin{aligned}
  \dot{\lambda} & = - R\bi+ \bv.
\end{aligned}
\endequ
For IPMSMs, the measurable current satisfies
\begin{equation}
\label{i}
\bi = \call^{-1}(\theta) [\lambda - \psi_m \bc(\theta)]
\end{equation}
with the mappings
$$
\begin{aligned}
 \call(\theta) & ~:=~ L_s I_2 + {L_0 \over 2} Q(\theta)
 \\
 Q(\theta) & ~:=~ \lef[{cc} \cos(2 \theta) & \sin(2\theta) \\   \sin(2\theta) &  -\cos(2
\theta)\rig]
\\
\bc(\theta) & ~:=~ \col(\cos\theta, \sin\theta).
\end{aligned}
$$
Following  \cite{CHONAM} we define the active flux $\bx(t) \in \rea^2$ as
\begequ
\lab{actflu}
\bx:=\lambda - L_q \bi.
\endequ
As shown in \cite{CHONAM,ORTetal}, the rotor angle can be reconstructed from $\bx$ via the relation
\begequ
\label{tan_theta}
\tan(\theta)={\bx_2 \over \bx_1}.
\endequ

The central problem in sensorless control of electric motors is to estimate the mechanical angular position $\theta(t) \in {\mathbb S}$ from the measurable signals $\bv(t) \in \rea^2$ and $\bi(t) \in \rea^2$. This is equivalent to online estimation of the active flux $\bx$. \\

\noindent {\bf Observer Design Problem}
Consider the dynamical model \eqref{ipmsm1} with the output \eqref{i}. Design an observer
\begin{equation}
\begin{aligned}
    \dot \eta &~=~ F(\eta, \bi,\bv)
    \\
    \hat\bx &~=~ N(\eta,\bi,\bv)
\end{aligned}
\end{equation}
with $\eta(t)\in \rea^{n_\eta}$, that ensures global exponential convergence of the active flux estimation error, that is,
\begin{equation}
\label{conv:1}
\lim_{t\to\infty} |\hat\bx(t) - \bx(t)| = 0\quad\mbox{(exp.)}
\end{equation}
for all system and observer initial conditions. \hfill $\square$
%
\section{Assumptions and Preliminary Results}
\lab{sec3}
%
The success of the GES observer design in \cite{ORTetal} relies on the following lemma, which was established in \cite{CHONAM}; see also \cite[Lemma 1]{ORTetal}. By introducing some LTI stable filters, we are able to obtain a perturbed linear regression equation (LRE) which is instrumental for the gradient observer design.

\begin{lemma}
\label{lem1}\rm
The electrical dynamics of the IPMSM \eqref{ipmsm1}, \eqref{i} satisfies the following (perturbed) linear regression equation
\begequ
\label{flux_dyn}
\begin{aligned}
y & = \Phi^\top \bx + d + \et,
\end{aligned}
\endequ
where the active flux $\bx$ is  defined in \eqref{actflu}, and the {\em measurable} signals $y(t) \in \rea^2$ and $\Phi(t) \in \rea^2$ are given as
\begin{equation}
\begin{aligned}
 y & ~:=~ L_0  H_2[\bi]^\top \Omega_1 + {1\over \alpha}|\Omega_1|^2 + {1\over \alpha} H_2[\Omega_2^\top \Omega_1]\\
\nonumber
 \Phi & ~:=~ \Omega_1 + \Omega_2.
 \lab{yphi}
 \end{aligned}
\end{equation}
\noindent with the signals $\Omega_1(t) \in \rea^2$ and $\Omega_2(t) \in \rea^2$ defined as
\begalis{
\Omega_1  &:= H_2 [\bv - R\bi - L_q p\bi]\\
 \Omega_2 &:= \Omega_1 - L_0 H_1[\bi],
} 
and the filters  
\begequ
\lab{h1h2}
H_1(p) := {\alpha p \over p+\alpha},\;H_2(p):= {\alpha \over p+\alpha},
\endequ 
where $\alpha>0$ is a {\em tuning} parameter.

The (unknown) perturbing signal $d$ is given by
\begequ
\lab{d}
 d := - \ell H_1 \Bigg[\bi^\top {\bx \over |\bx|}\Bigg],
\endequ
with  $\ell:=\psi_m L_0$, and $\et\in \rea^2$ is an exponentially decaying term\footnote{Following standard practice, we neglect these terms in the sequel.}  caused by the initial conditions of the filters.\hfill $\square$
\end{lemma}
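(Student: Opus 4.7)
The plan is to derive the identity in three stages. First, I combine the definition \eqref{actflu} with Faraday's law \eqref{ipmsm1} to obtain the dynamical relation $\dot\bx = \bv - R\bi - L_q p\bi$, and then apply the low-pass filter $H_2(p)$ of \eqref{h1h2} to both sides. The operator identity $p H_2(p) = H_1(p)$ yields $H_1[\bx] = H_2[\bv - R\bi - L_q p\bi] = \Omega_1$ modulo an exponentially decaying transient $\et$ that collects the filter initial conditions. The definitions of $\Omega_2$ and $\Phi$ then give $\Omega_2 = H_1[\bx - L_0\bi] + \et$ and $\Phi = H_1[2\bx - L_0 \bi] + \et$, which turns the problem into a purely algebraic one about filtered combinations of $\bx$ and $\bi$.

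Second, I exploit the IPMSM geometry. Inverting \eqref{i} gives $\lambda = \call(\theta)\bi + \psi_m\bc(\theta)$, and substituting into \eqref{actflu} together with the elementary identity $\call(\theta) - L_q I_2 = \frac{L_0}{2}(I_2 + Q(\theta)) = L_0 \bc(\theta)\bc(\theta)^\top$ — which follows from $Q(\theta)\bc(\theta) = \bc(\theta)$, i.e.\ $\bc(\theta)$ is the unit eigenvector of $Q(\theta)$ for the eigenvalue $+1$ — yields the compact structural formula $\bx = \psi_m\bc(\theta) + L_0\, \bc(\theta)\bc(\theta)^\top \bi$. Two corollaries are used downstream: (i) $\bx$ is collinear with $\bc(\theta)$, so under the standard positivity condition on $\psi_m + L_0 i_d$ we have $\bc(\theta) = \bx/|\bx|$; and (ii) the $d$-axis projection of the current satisfies $\bi^\top \bc(\theta) = \bi^\top \bx/|\bx|$, which is the source of the nonlinear residue in \eqref{d}.

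Third, I match the target form of $y$. Using the partial-fraction identity $H_1(p) = \alpha(1 - H_2(p))$, rewrite $\bx = \frac{1}{\alpha}\Omega_1 + H_2[\bx]$ (modulo $\et$); this immediately accounts for the $\frac{1}{\alpha}|\Omega_1|^2$ term in $y$ through the expansion of $\Omega_1^\top\bx$. The remaining cross terms from $\Omega_1^\top H_2[\bx]$ and $\Omega_2^\top \bx$ are then simplified by substituting the geometric formula from Stage 2 and by applying a swapping-lemma argument — pulling $H_1$ and $H_2$ through bilinear products, which is valid modulo exponentially decaying terms. The two remaining measurable pieces of $y$, namely $L_0 H_2[\bi]^\top \Omega_1$ and $\frac{1}{\alpha}H_2[\Omega_2^\top \Omega_1]$, are designed precisely to absorb the bilinears that couple $\bi$ with filtered $\bx$, leaving behind exactly the term $-\ell H_1[\bi^\top \bc(\theta)]$ with $\ell = \psi_m L_0$, which by corollary (ii) equals the $d$ of \eqref{d}.

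The hard part will be the bookkeeping in Stage 3: the three components of $y$ are tailored so that, after repeated use of $H_1 = \alpha(1 - H_2)$ and of swapping identities for bilinear filtered signals, every cross term cancels except that single nonlinear residue. The delicate substep is keeping track of which contributions are genuine and which are transients from nonzero filter states, and lumping the latter into a single exponentially decaying $\et$. Once this cancellation is verified, \eqref{flux_dyn} follows by simply collecting terms.
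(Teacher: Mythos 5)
The paper never proves Lemma \ref{lem1}: it is imported from \cite{CHONAM} and \cite[Lemma 1]{ORTetal}, so the only benchmark is the standard derivation in those references---which is exactly the route you sketch. Your Stages 1 and 2 are correct and complete: $\Omega_1=H_1[\bx]+\et$ follows from $\dot\bx=\bv-R\bi-L_q p\bi$ and $pH_2(p)=H_1(p)$, and the structural formula $\bx=\big(\psi_m+L_0\,\bi^\top\bc(\theta)\big)\bc(\theta)$ follows from $\call(\theta)-L_qI_2={L_0\over 2}(I_2+Q(\theta))=L_0\bc(\theta)\bc(\theta)^\top$, whence $\bc(\theta)=\bx/|\bx|$ under Assumption \ref{ass:3}. (Minor point: $y$ and $\et$ in \eqref{flux_dyn} are scalars, not elements of $\rea^2$; your treatment is implicitly scalar, which is the right reading.)

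The gap is in Stage 3, which you leave as a promise. You never write down the one identity that anchors the whole cancellation, namely the scalar consequence of your structural formula,
$$
(\bx-L_0\bi)^\top\bx=\psi_m\big(\psi_m+L_0\,\bi^\top\bc(\theta)\big)=\psi_m^2+\ell\,\bi^\top{\bx\over|\bx|},
$$
together with the observation that $H_1$ annihilates the constant $\psi_m^2$ (up to an exponentially decaying term), so that $-H_1[(\bx-L_0\bi)^\top\bx]=d+\et$. Without this target the ``bookkeeping'' has nothing to converge to. For the record, the cancellation does close: set $\bar\bx:=H_2[\bx]$ and $\bar u:=H_2[\bx-L_0\bi]$, so that $\Omega_1=\alpha(\bx-\bar\bx)$ and $\Omega_2=\alpha\big((\bx-L_0\bi)-\bar u\big)$. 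A one-line swapping computation using ${d\over dt}(\bar u^\top\bar\bx)$ and $H_1=\alpha(1-H_2)$ gives ${1\over\alpha}H_2[\Omega_2^\top\Omega_1]=\alpha H_2[(\bx-L_0\bi)^\top\bx]-\alpha\,\bar u^\top\bar\bx+\et$, and substituting this together with $L_0H_2[\bi]^\top\Omega_1+{1\over\alpha}|\Omega_1|^2-\Omega_1^\top\bx=-\Omega_1^\top\bar u$ into $y-\Phi^\top\bx$ makes every cross term cancel identically, leaving $y-\Phi^\top\bx=-\alpha\big[(\bx-L_0\bi)^\top\bx-H_2[(\bx-L_0\bi)^\top\bx]\big]+\et=-H_1[(\bx-L_0\bi)^\top\bx]+\et=d+\et$. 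So your plan is sound and verifiable, but as submitted the proof is unfinished: the displayed identity is the missing keystone, and Stage 3 must actually be carried out rather than asserted.
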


To address the observer design problem, we make the following assumptions that, as thoroughly discussed in \cite{CHONAM,ORTetal}, hold true in practice---see Remark \ref{rem2} below. 

\begin{assumption}\rm 
\label{ass:1} ({\em Motor rotating behavior})
$\Phi$ is persistently excited, namely, there exist $T>0$ and $\delta>0$ such that
\begequ
\label{PE}
\int_t^{t+T} \Phi(s) \Phi^\top(s) ds \ge \delta I_2, \quad \forall t \ge0
\endequ
with $|\Phi|$ upper bounded.
\end{assumption}
\begin{assumption}\rm 
\label{ass:2} ({\em Boundedness})
The motor operates in a mode guaranteeing that all signals $\bi$, $\bv$ and $\lambda$ are bounded and $|\bx| \ge x_{\tt min}$ for some constant $x_{\tt min} >0$.
\end{assumption}
\begin{assumption}\rm
\lab{ass:3} ({\em Small anisotropy})
The current $\bi$ verifies $|L_0 \bi | < \psi_m$.
\end{assumption}

Motivated by the flux dynamics \eqref{ipmsm1} an observer of the following form is suggested in many works
\begin{equation}
\label{observer}
\begin{aligned}
	\dot{\hat \lambda} & = \bv - R\bi + E\\
	\hat \bx & = \hat \lambda - L_q \bi \\
\end{aligned}
\end{equation}
with a signal $E(t) \in \rea^2$ to be defined. Invoking \eqref{actflu} we see that, for this observer structure, we get 
\begequ
\lab{dottilx}
\dot{\tilde \bx} = E
\endequ
with $\tilde \bx:= \hat\bx - \bx$. Several different correction terms $E$ can be used to complete the observer design.\\

\noindent {\bf i)} By leaving out the perturbation term $d$ and applying a gradient descent to minimize the cost function 
	$$
	J_{\tt tie}(\hat\bx) = \Big|y - \Phi^\top \hat\bx \Big|^2,
	$$
	the correction term in \cite{CHOetal}\footnote{The paper \cite{CHOetal} considers another perturbed linear regression model rather than \eqref{flux_dyn}, but they are exactly in the same mathematical form. Hence, we may conclude the same stability properties for the closed-loop dynamics.} is selected as
	$$
	E_{\tt tie} = \gamma \Phi (y - \Phi^\top \hat \bx), \quad \gamma>0.
	$$
As shown in  \cite{CHOetal} the estimation error dynamics is {\em practically} exponentially stable with bounded ultimate errors.\\
	
\noindent {\bf ii)}  The paper \cite{ORTetal} proposes the correction term\footnote{In \cite{ORTetal} a slight modification in $\hat d$ is introduced to guarantee $|\hat\bx|>x_{\tt min}$ and thus avoid singularities.}
\begin{equation}
\label{E1}
\begin{aligned}
E_{\tt aut} & ~=~  \gamma \Phi \Big( y -\Phi^\top \hat\bx - \hat d \Big), \quad \gamma>0
\\
\hat d &~=~ - \ell H_1 \Bigg[\bi^\top {\hat\bx \over |\hat\bx|}\Bigg].
\end{aligned}
\end{equation} 
 The term in \eqref{E1} can be viewed as a certainty equivalent approximation of the gradient descent to minimize the cost function
\begin{equation}
\label{J2}
J_{\tt aut}(\hat\bx) = \Big|y- \Phi^\top \hat\bx - \Hat{d} \Big|^2.
\end{equation}

In  \cite{ORTetal}, to guarantee global exponential stability, we require the adaptation gain $\gamma>0$ to be {\em sufficiently small}, which may have a deleterious effect on the transient performance. The main objective of this paper is to propose an alternative observer design that {\em removes} the requirement of small adaptation gain.

\begin{remark}\rm 
In Lemma \ref{lem1}, we obtain a nonlinear regression model \eqref{flux_dyn} via the filtering technique. It can be viewed as a linear regression model with respect to $\bx$ with a perturbation term $d$. This term $d = - \ell H_1 [\bi^\top {\bx \over |\bx|}]$ is, in nature, the filtered signal of the inner product between the current (stator flux) and the \emph{normalized} rotor flux vector, which should be maintained to be zero for torque maximization. 
\end{remark}

\begin{remark}\rm 
\lab{rem2}
Some remarks on the assumptions adopted in the paper are in order.

\noindent {\bf (i)} The persistency of excitation condition \eqref{PE} is generally satisfied when motors are operating at middle- or high-speed regions. Otherwise, it is necessary to probe high-frequency signals in low speed to impose observability \cite[Section 5]{ORTetal}. Note that another way to use signal injection is to demodulate the angular information implicitly carried by the high-frequency component of stator currents \cite{NAM,YIetal}. 

\noindent {\bf (ii)} Assumption \ref{ass:2} is reasonable since the active flux $\bx$ is not zero as far as the rotor permanent magnet is magnetized.
\end{remark}
%
\section{Main Result}
\label{sec3}
%
To overcome the performance limitation imposed by the requirement to use a small adaptation gain in the gradient-based scheme of \cite{ORTetal} we add to the filtering technique of \cite{CHONAM,ORTetal}, a novel mathematical concept: the creation of a {\em virtual invariant manifold}. Thanks to this new technique we are able to replace the gradient-based observer design by the high performance KRE-based estimator. As explained in \cite{ORTetalARC} the key step in Kreisselmeier's estimator is the construction of the KRE that, as shown in \cite{ORTetalTAC}, is a particular form of the more general dynamic regressor extension. For the purpose of this paper, the key feature of this estimator is that, in contrast with gradient or least-squares schemes, it is possible to make the estimator converge arbitrarily fast (after a short transient phase due to filtering) by increasing the adaptation gain---see also \cite{ARAetal} for a detailed analysis of the properties of this estimator.  

Motivated by Lemma \ref{lem1}, Kreisselmeier's estimator \cite{KRE}, the results of \cite{Li}, our previous work \cite{ORTetal} and the discussion in the previous section, we propose the following flux observer.

\emph{Flux and position observer}
\begin{equation}
\label{KRE}
\begin{aligned}
&\left.
    \begin{aligned}
	\dot Q & = - a\big(Q- \Phi \Phi^\top \big),\;Q(0)=0\\
\dot Y 	& =  - a\big(Y - \Phi e\big) +QE,\;Y(0)=0	\\
E&=-\gamma Y \\
\end{aligned}
~~ \right\} ~\mbox{(KRE)}
\\
&\left.
\begin{aligned}
	\dot{\hat \lambda} & ~= \bv - R\bi + E\\
	\hat \bx & ~= \hat \lambda - L_q \bi \\
 \hat \theta & ~=  { {\rm atan2} (\hat{\bx}_2, \hat{\bx}_1), }
\end{aligned}
\hspace{.9cm} \right\}
~~\mbox{(Flux-position estimate)}
\end{aligned}
\end{equation}
with the estimate of the disturbance signal
\begin{equation}
\label{e}
\begin{aligned}
\hat d &~=~ - \ell H_1 [\bi^\top \sigma(\hat\bx)],
\end{aligned}
\end{equation}
the variable
$$
 e:= \Phi^\top \hat \bx + \hat d - y,
$$
and the mapping
$$
\sigma(\hat\bx) = \left\{
\begin{aligned}
& ~~{\hat\bx \over |\hat\bx|}  & \qquad \text{if~~} |\hat\bx| \ge \epsilon>0 \\
& ~~\col(0,0) & \qquad \text{otherwise,}
\end{aligned}
\right.
$$
where $\epsilon\in(0,x_{\tt min})$, and $a, \alpha ,\gamma>0$ are tuning parameters. \hfill $\square$

We summarize the provable properties of the above observer as follows.

\begin{proposition}\rm\label{prop:1}
Consider the model \eqref{ipmsm1} with output \eqref{i}. For any adaptation gains $\gamma>0$ and $a>0$, there always exists a scalar $\alpha_{\max}>0$ such that the observer \eqref{KRE} and \eqref{e} guarantees the global exponential convergence \eqref{conv:1} and
$$
\lim_{t\to\infty} 
\left|\hat\theta(t) - \theta(t)  \right|
= 0 \quad \mbox{(exp.)}
$$
{\em for any}\footnote{We underscore the fact that, as shown in \eqref{h1h2}, $\alpha$ is a free tuning parameter that determines the time constant of the low pass LTI filters $H_1(p)$ and $H_2(p)$.} $\alpha \leq \alpha_{\max}$.
\end{proposition}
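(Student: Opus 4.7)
The plan is to rewrite the error dynamics as a cascade on an enlarged state space, exploit the persistency of excitation of $\Phi$ to make the ``active'' subsystem exponentially contractive for \emph{every} $\gamma>0$, and then use smallness of $\alpha$ (rather than of $\gamma$) to dominate the non-vanishing perturbation generated by the unknown $d$. First I introduce the flux error $\tilde\bx := \hat\bx - \bx$ and the novel \emph{off-manifold coordinate} $Z := Y - Q\tilde\bx$. Subtracting \eqref{ipmsm1} from the $\hat\lambda$-equation in \eqref{KRE} yields $\dot{\tilde\bx} = E = -\gamma Y$, and Lemma~\ref{lem1} gives $e = \Phi^\top \tilde\bx + \tilde d$ with $\tilde d := \hat d - d$. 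A direct computation --- differentiating $Z$, substituting the KRE dynamics, and cancelling the two $QE$ terms --- produces the cascade
\begin{equation*}
\begin{aligned}
\dot Q           &= -a\,(Q-\Phi\Phi^\top),\\
\dot Z           &= -a\,Z + a\,\Phi\,\tilde d,\\
\dot{\tilde\bx}  &= -\gamma Q\tilde\bx - \gamma Z.
\end{aligned}
\end{equation*}
The set $\mathcal{M} := \{Z=0\}$ is the announced \emph{virtual invariant manifold}: on $\mathcal{M}$ the KRE output $Y$ reproduces $Q\tilde\bx$ exactly as in Kreisselmeier's LTI construction, and $\mathcal{M}$ is kept invariant up to the residual disturbance $\tilde d$.

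Next, Assumption~\ref{ass:1} together with the explicit solution $Q(t) = a\int_0^t e^{-a(t-s)}\Phi(s)\Phi^\top(s)\,ds$ yields, by a standard filtered-PE argument, constants $t^\star, \kappa > 0$ such that $Q(t) \geq \kappa I_2$ for all $t \geq t^\star$. On this regime the nominal part $\dot{\tilde\bx} = -\gamma Q\tilde\bx$ is exponentially contractive for \emph{every} $\gamma > 0$ --- this is the structural improvement over the gradient scheme of \cite{ORTetal}, in which smallness of $\gamma$ was needed precisely to dominate $\tilde d$. To control $\tilde d$ I would use the algebraic identity $H_1(p) = \alpha - \alpha\,H_2(p)$, which shows that $H_1$ has $\mathcal{L}_\infty$-gain of order $\alpha$ (since $H_2$ has a positive impulse response of unit $L_1$-norm). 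Combined with Assumption~\ref{ass:2} --- which makes both $\bx/|\bx|$ and the saturated map $\sigma$ Lipschitz continuous with a common constant depending only on $x_{\tt min}$ and $\epsilon$ --- and the boundedness of $\bi$ from Assumptions~\ref{ass:2}--\ref{ass:3}, this produces the key estimate
\begin{equation*}
|\tilde d(t)| \;\leq\; c_1\,\alpha\,\sup_{0\leq s \leq t}|\tilde\bx(s)|,
\end{equation*}
with $c_1$ independent of $\alpha$, $\gamma$ and $a$.

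Finally, on $t\geq t^\star$ I would close the stability proof with a composite Lyapunov function $V = \tilde\bx^\top\tilde\bx + \mu\,|Z|^2$, choosing $\mu > 0$ via Young's inequality to cancel the $\tilde\bx$--$Z$ cross term and then absorbing the perturbation using the displayed bound on $\tilde d$. The result is a differential inequality of the form $\dot V \leq -\lambda_1 V + \lambda_2\,\alpha\,\sup_{s\leq t}V(s)$ that, by a standard small-gain or Bellman--Gronwall comparison, can be completed into $\dot V \leq -(\lambda_1 - c\,\alpha)V$, giving GES of $(\tilde\bx,Z)$ for every $\alpha \leq \alpha_{\max} := \lambda_1/(2c)$ and \emph{every} $\gamma, a > 0$. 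Exponential convergence of $\hat\theta - \theta$ then follows from the local $C^1$ character of $\mathrm{atan2}$ at points with $|\bx|\geq x_{\tt min} > 0$. The main obstacle I anticipate is passing from the supremum-in-time bound on $\tilde d$ to a pointwise estimate inside the Lyapunov inequality without sacrificing the linear-in-$\alpha$ scaling; I would handle this either by augmenting $V$ with an exponentially weighted first-order filter of $|\tilde\bx|$, or by recasting the whole argument as a small-gain interconnection of three exponentially stable subsystems whose loop gain is $O(\alpha)$.
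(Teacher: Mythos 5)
Your proposal follows essentially the same route as the paper: your off-manifold coordinate $Z := Y - Q\tilde\bx$ is exactly the paper's auxiliary variable $\xi$ (the paper introduces $\xi$ through the dynamics $\dot\xi = -a(\xi - \Phi\tilde d)$, $\xi(0)=0$, and shows $\pi := Y - Q\tilde\bx - \xi$ satisfies $\dot\pi = -a\pi$ with $\pi(0)=0$, which is just a rephrasing of your direct computation of $\dot Z$), the cascade $\dot{\tilde\bx} = -\gamma Q\tilde\bx - \gamma Z$ is identical, and both arguments rest on PE giving $Q(t)\ge qI_2$ for $t\ge T$ so that the nominal subsystem contracts for every $\gamma>0$ while smallness of $\alpha$ dominates $\tilde d$. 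The one place you diverge is the treatment of $\tilde d$: the paper writes $\tilde d = -H_1[\bw^\top\tilde\bx]$ with a \emph{bounded} gain vector $\bw$, realizes $H_1$ in state space with a scalar state $z$, and obtains a pointwise five-dimensional LTV system $\dot\chi = A(t)\chi + \alpha\Delta(t)\chi$ killed by a single diagonal quadratic Lyapunov function --- thereby never encountering the supremum-in-time bound whose conversion to a pointwise estimate you correctly flag as the weak link of your $\mathcal{L}_\infty$-gain/Halanay route. Your first proposed repair (augmenting $V$ with an exponentially weighted first-order filter of $|\tilde\bx|$) is precisely the paper's device, so your sketch closes. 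Two small cautions: the map $\sigma$ is \emph{not} Lipschitz (it jumps at $|\hat\bx|=\epsilon$); what actually holds, and what the paper's bounded $\bw$ encodes, is the incremental bound $|\sigma(\hat\bx) - \bx/|\bx|\,|\le c|\tilde\bx|$ valid because $\epsilon < x_{\tt min}\le|\bx|$ forces $|\tilde\bx|\ge x_{\tt min}-\epsilon$ on the saturated branch. Also, the differential inequality $\dot V \le -\lambda_1 V + \lambda_2\alpha\sup_{s\le t}V(s)$ cannot simply be ``completed into'' $\dot V\le -(\lambda_1-c\alpha)V$; it needs the Halanay-type comparison argument you allude to, which does work when $\lambda_2\alpha<\lambda_1$ but should be stated as such.
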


\begin{proof}\rm 
 First, it is shown in \cite[Section III-A]{ARAetal} that, given {\bf Assumption 1}, the square matrix $Q$ is positive definite for all $t\ge T$, i.e.,
 \begequ
 \lab{bouq}
 Q(t) \ge q I_2, \quad \forall t \ge T,
 \endequ
for some $q>0$. 
 
The next key step is to show the existence of a (virtual) forward invariant manifold
$$
\calm:= \{(Y,Q,\xi) \in \rea^2 \times \rea^{2\times 2} \times {\rea^2}: Y = Q\tilde \bx + \xi\},
$$
in which the variable $\xi$ is governed by the dynamics
\begin{equation}
\label{id:1}
\dot \xi = - a\big(\xi - \Phi \tilde d \big),\;\xi(0)=0,
\end{equation}
and we have defined $\tilde d : =\hat d -d$. To verify this we define the error signal
$$
\pi:= Y -(Q\tilde \bx + \xi),
$$
whose derivative is given as
and compute its time derivative as
\begalis{
	\dot \pi & ~=~\dot Y - \dot Q \tilde \bx - Q\dot{\tilde{\bx}}- \dot\xi \\
	&~=~ - a\big(Y - \Phi e\big) +QE + a \big( Q -\Phi\Phi^\top \big)\tilde \bx - QE\\
        & ~~~\quad  + a\big(\xi - \Phi \tilde d\big)\\
	& ~= ~- a(Y- Q\tilde \bx -\xi) \\
	& ~=~ - a \pi.
}
Consequently, since for all $\tilde x(0)$ we have that
$$
\pi(0)=Y(0)-Q(0)\tilde x(0)-\xi(0)=0,
$$
we complete the proof of the invariance claim and show that
\begequ
\lab{manbeh}
Y(t) = Q(t)\tilde x(t) + \xi(t),\;\forall t \geq 0.
\endequ

The last step in the proof is to analyze the closed-loop dynamics. The active flux estimation error $\tilde \bx$ satisfies
\begalis{
\dot {\tilde \bx}&=E\\
&=-\gamma Y\\
&= -\gamma Q\tilde \bx -\gamma \xi,
}
where we have used \eqref{manbeh} to get the last identity. Now, the variable $\tilde d$ is given by
\begalis{
\tilde d 
& = -\ell H_1 \left[\bi^\top \left({\hat \bx \over |\hat \bx|}-{\bx \over |\bx|}\right)\right],
}
which, as shown in \cite{ORTetal}, may be written as
\begali{
\tilde d 
& = - H_1[\bw^\top \tilde\bx] \lab{tild}
}
with the signal $\bw(t) \in \rea^2$ defined via
$$
\bw
~=~\left\{
\begin{aligned}
& {\ell \over |\hat \bx|} \Bigg[ I_{2}
 -    { (\bx + \hat\bx) \bx^\top \over |\bx|(|\bx|+|\hat\bx|)} \Bigg] \bi,  &&
 \mbox{if~} |\hat\bx|\ge \epsilon
 \\
& - \ell{\bi^\top \bx \over |\bx| |\tilde\bx|^2} \tilde\bx , && \mbox{otherwise.}
\end{aligned}
\right.
$$
A state space realization of \eqref{tild} is given by
\begalis{
\dot z&=-\alpha (z-\alpha \bw^\top\tilde\bx) \\
\tilde d & = z-\alpha \bw^\top\tilde\bx.
}
Defining the state vector $\chi(t) := \col(\tilde \bx(t), \xi(t), z(t)) \in \rea^5$ we obtain the closed-loop dynamics 
\begin{equation}
\label{dot:chi}
\dot \chi
=
\begmat{
- \gamma Q & - \gamma I_2 & 0_{2 \times 1} \\
-a\alpha \Phi \bw^\top & - aI_2 &  a\Phi\\
\alpha^2 \bw^\top & 0_{1 \times 2} & -\alpha
}\chi.
\end{equation}

The dynamics \eqref{dot:chi} may be written as a perturbed linear time-varying (LTV) system of the form
$$
\dot \chi=A(t)\chi+\alpha \Delta(t)\chi,
$$
where we defined the matrices
$$
\begin{aligned}
A(t) & ~:=~ \begmat{
- \gamma Q(t) & - \gamma I_2 & 0_{2 \times 1} \\
0_{2 \times 2} & - aI_2 &  a\Phi(t)\\
0_{1 \times 2} & 0_{1 \times 2} & -\alpha
}\in \rea^{5 \times 5}
\\
\Delta(t) & ~:=~
\begmat{
0_{2 \times 2} & 0_{2 \times 2} & 0_{2 \times 1} \\
-a \Phi(t) \bw^\top(t) & 0_{2 \times 2} &  0_{2 \times 1}\\
\alpha \bw^\top(t) & 0_{1 \times 2} & 0
}\in \rea^{5 \times 5}
\end{aligned}
$$
We make the observation that $\Phi(t)$ and $\bw(t)$ are both bounded. Hence, from \eqref{KRE} we have that $Q(t)$ is also bounded. As a result, the above LTV system is forward complete during $[0,T)$. Moreover, from Assumptions \ref{ass:1} and \ref{ass:2}, we have that $Q(t)$ is positive definite for all $t\ge T$, that is, it satisfies \eqref{bouq}. We proceed with our analysis for $t\ge T$.

Let us introduce the following parameters
$$
\mu_\Phi:=\|\Phi\|_\infty , \; \mu_w:=\|\bw\|_\infty.
$$
Consider the candidate Lyapunov function 
$$
V(\chi):=\hal \chi^\top P \chi,
$$
where we defined the positive definite matrix
$$
P:=\diag\left\{{1\over \gamma},{1\over \gamma},{1\over aq},{1\over aq}, {\mu_\Phi^2 \over q\alpha}\right\}.
$$
Its time derivative satisfies
$$
\begin{aligned}
\dot V & ~= ~
- \|\tilde \bx\|^2_Q -  \tilde \bx^\top \xi - {1\over q} \left( |\xi|^2  -  \xi^\top \Phi z +  \mu_\Phi^2 |z|^2 \right)
\\
& \qquad
+\alpha \chi^\top P\Delta \chi
\\
& ~\le ~
- q |\tilde \bx|^2 -  \tilde \bx^\top \xi - {1\over 2q} \left( |\xi|^2   -  \mu_\Phi^2 |z|^2 \right)+\alpha \rho | \chi|^2
\\
& ~\le ~
- q |\tilde \bx|^2 + {3q\over 4}|\tilde \bx|^2 + {1\over 3q}|\xi|^2 - {1\over 2q} \left( |\xi|^2   -  \mu_\Phi^2 |z|^2 \right)
\\
& \qquad
+\alpha \rho | \chi|^2
\\
& ~\le ~
 -  {q\over 4}|\tilde \bx|^2  - {1\over 6q} |\xi|^2 - {\mu_\Phi^2\over 2q} |z|^2 +\alpha \rho | \chi|^2 \\
& ~\le ~
 -  (\beta -\alpha \rho) | \chi|^2
\end{aligned}
$$
where
$$
\beta:=\min\Big\{{q\over 4}, {1\over 6q},{\mu_\Phi^2\over 2q}\Big\}
$$
and we used the fact that
$$
\chi^\top P\Delta \chi={1\over q}(\xi^\top \Phi+\mu_\Phi^2 z)\bw^\top \tilde x \leq \rho | \chi|^2,
$$ 
for some $\rho>0$  {\em independent} of $\alpha$. Consequently, there exists a sufficiently small $\alpha$ such that   $\beta -\alpha \rho>0$, and the proof is concluded invoking the last upper bound on $\dot V$ above, as well as the algebraic equation \eqref{tan_theta}. 
\qed
\end{proof}

\begin{remark}
\rm
There are three positive adaptation parameters to tune, i.e., $a,\alpha$ and $\gamma$. 

\noindent {\bf (i)} The parameter $\alpha$ appears in the filters $H_1(p)$ and $H_2(p)$ for pre-filtering in order to generate the regression model \eqref{flux_dyn}. This parameter affects the convergence rate of the exponentially decaying term $\et$. As shown in Proposition \ref{prop:1}, a large $\alpha>0$ may yield instability of the closed loop. 
	
\noindent {\bf (ii)} 	The parameter $a$ appears in the KRE $\dot Q  = -a(Q-\Phi\Phi^\top)$. Roughly, a small $a>0$ means that ``more past'' information of $\Phi\Phi^\top$ is utilized in the matrix $Q$.

\noindent {\bf (iii)} The parameter $\gamma$ appears in the $(1,1)$-element of $A(t)$, making it closely connected to the convergence rate of the active flux estimate. In the above design we fix $Q(0)=0$ and $Y(0)=0$ and $\xi(0)=0$. If we select a small $a>0$ and a sufficiently large $\gamma$, then the estimation error $\tilde \bx$ will converge to a small neighborhood of zero very fast. In contrast, in \cite{ORTetal} we require the adaption gain $\gamma>0$ \emph{sufficiently small}, thus we are unable to assign the transient performance.
\end{remark}
\begin{remark}
\rm 
{The correction term $E$ in the proposed flux and position observer \eqref{KRE} can be roughly viewed as the gradient descent of the cost function generated from the KRE, rather than the ``natural'' cost function \eqref{J2} adopted in \cite{ORTetal}. This idea was originally proposed in the pioneer work \cite{KRE} to improve transient performance in adaptive observers, for which the unknown parameters are \emph{constant}. In the context of observer design, we need to take into account the dynamics of the \emph{time-varying} states, and this is captured by the additional variable $\xi$ in the proof.  }
\end{remark}



\section{Simulations and Experiments}

In this section, we present some simulation and experimental results to valid the performance of the proposed flux and position observer.

\subsection{Simulation Results}
\label{sec:51}

In simulations, in order to make it more realistic we adopted the same motor parameters as those used in experiments. They are $R=2.5 ~\Omega$, $L_d = 0.00782$ H, $L_q = 0.00782$ H, and $\psi_m = 0.10$, with the pole number equal to 8. We consider the gains $\alpha = 200\pi$ and $ a = 20\pi$, and consider different adaptation gains $\gamma$. Simulations were performed using MATLAB/Simulink. The motor was rotating at 1000 rpm, and we selected the initial magnetic flux angle error as $\pi/2$. The magnitude error was set as twice as the real value, i.e., $\hat{\lambda}(0) = 2\psi_m [\cos(\theta - \pi/2), \sin(\theta - \pi/2)]^\top$.

In Fig. \ref{fig:3}, we present the simulation results for the observer \eqref{KRE} with the adaptation gain $\gamma =1$, showing its satisfactory performance. By increasing $\gamma$ to 5, we observe in Fig. \ref{fig:1} that the convergence rate becomes faster as expected from Proposition \ref{prop:1}. This is compared to the position observer in our previous work \cite{ORTetal}, with the simulation results in the same scenario shown in Figs. \ref{fig:4} and \ref{fig:2} for the gains $\gamma =1$ and $\gamma =5$, respectively. Note that the approach in \cite{ORTetal} requires that the adaptation gain $\gamma$ sufficiently small to achieve global exponential stability. It is clear in Fig. \ref{fig:2} that when selecting $\gamma =5$ it slows down the convergence rate and even causes the inconsistency issue. This demonstrates the key merit of tunability for the observer proposed in this paper.

\begin{figure}[!htp]
   \centering
   \includegraphics[width=0.4\textwidth]{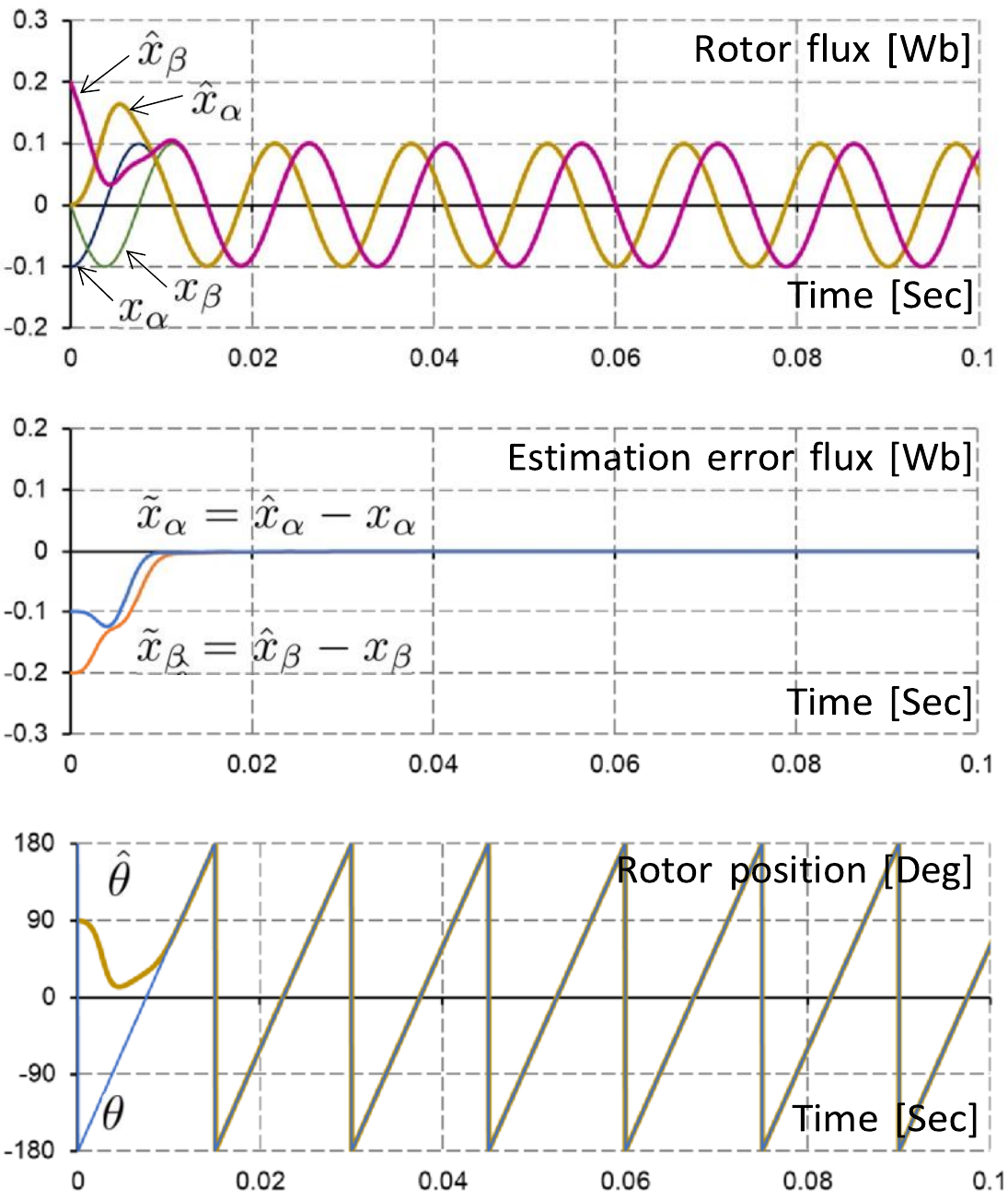}
   \caption{Simulation results: Estimation of rotor flux and angle from the proposed observer with $\gamma = 1$}
    \label{fig:3}
\end{figure}
\begin{figure}[!htp]
   \centering
   \includegraphics[width=0.4\textwidth]{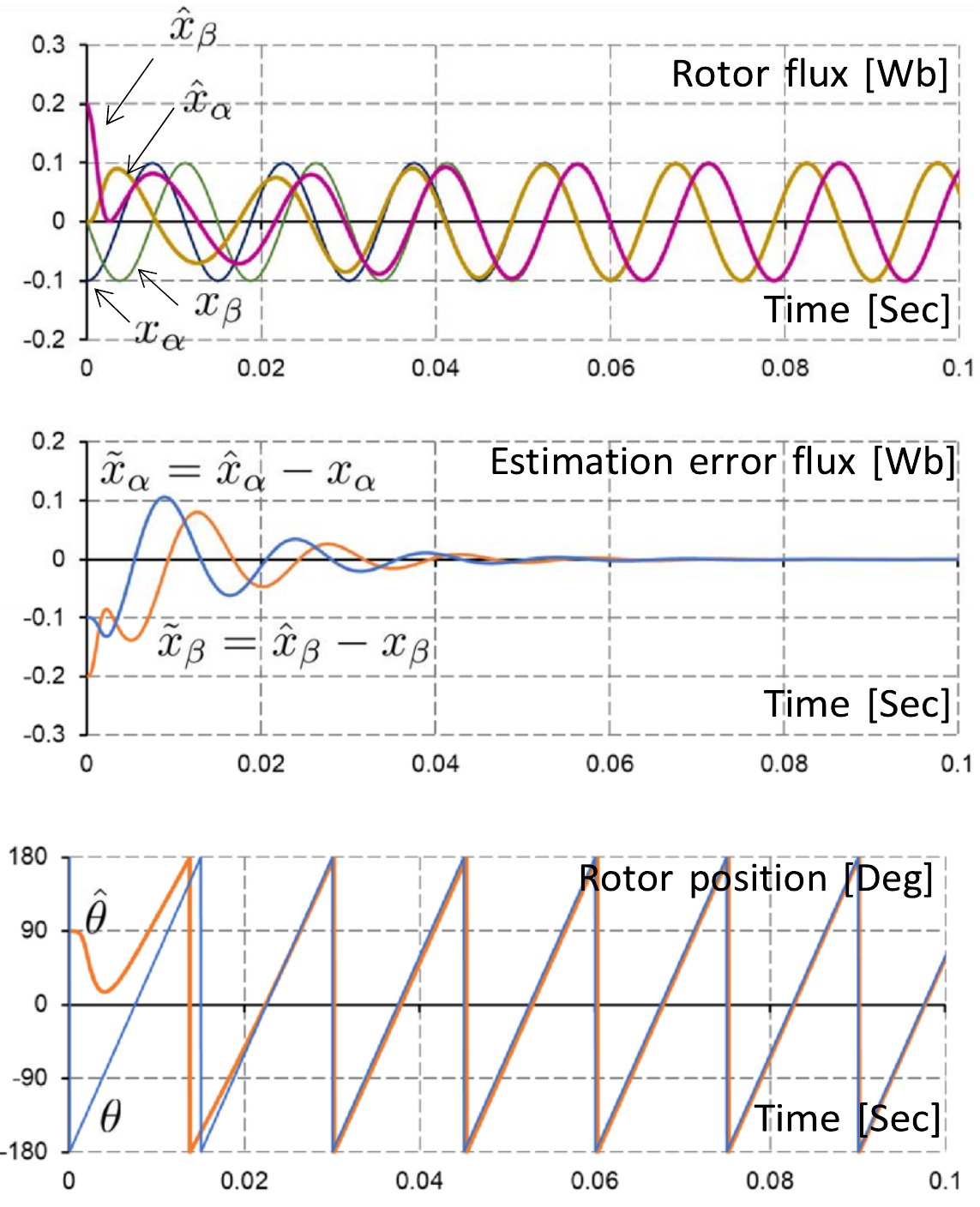}
   \caption{Simulation results: Estimation of rotor flux and angle from the observer in \cite{ORTetal} with $\gamma = 1$}
    \label{fig:4}
\end{figure}

\begin{figure}[!htp]
   \centering
   \includegraphics[width=0.4\textwidth]{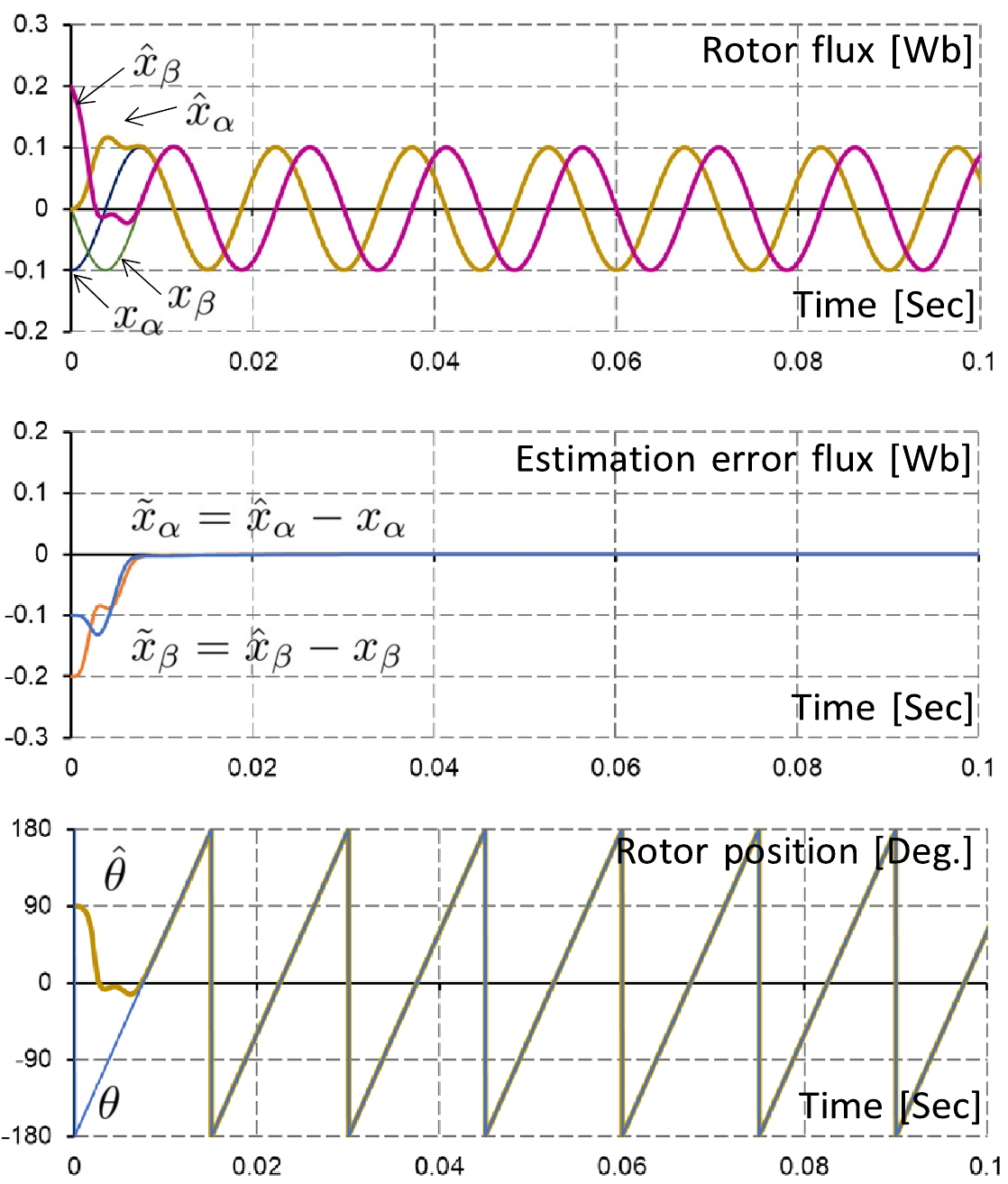}
   \caption{Simulation results: Estimation of rotor flux and angle from the proposed observer with $\gamma = 5$}
    \label{fig:1}
\end{figure}
\begin{figure}[!htp]
   \centering
   \includegraphics[width=0.4\textwidth]{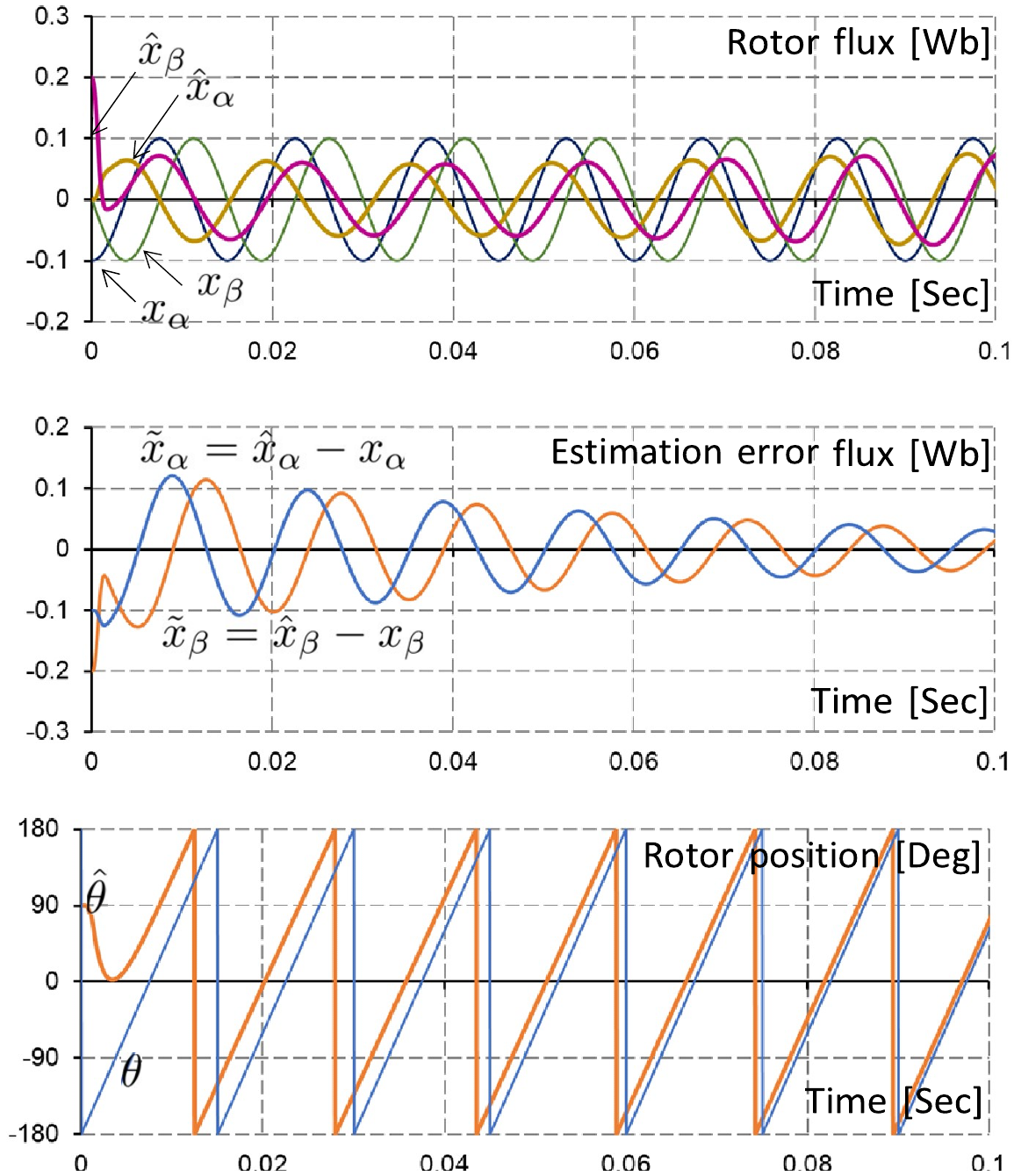}
   \caption{Simulation results: Estimation of rotor flux and angle from the observer in \cite{ORTetal} with $\gamma = 5$}
    \label{fig:2}
\end{figure}

\begin{figure}[!htp]
   \centering
   \subfigure[SiC MOSFET Based Inverter]{
   \includegraphics[width=0.35\textwidth]{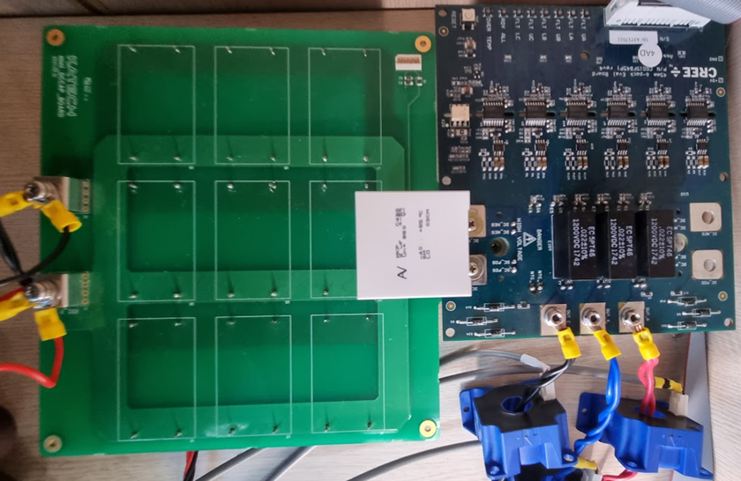}
   \label{fig:s1}
   }
   \subfigure[Dynamo set]{
   \includegraphics[width=0.35\textwidth]{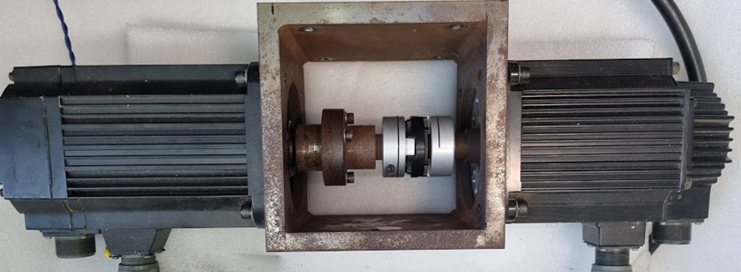}
   \label{fig:s2}
   }
   \caption{Experimental environment}
    \label{fig:setup}
\end{figure}


\begin{figure*}[!htp]
   \centering
   \subfigure[Estimation of rotor flux with $\gamma = 1$]{
   \includegraphics[width=0.315\textwidth,height=3cm,frame]{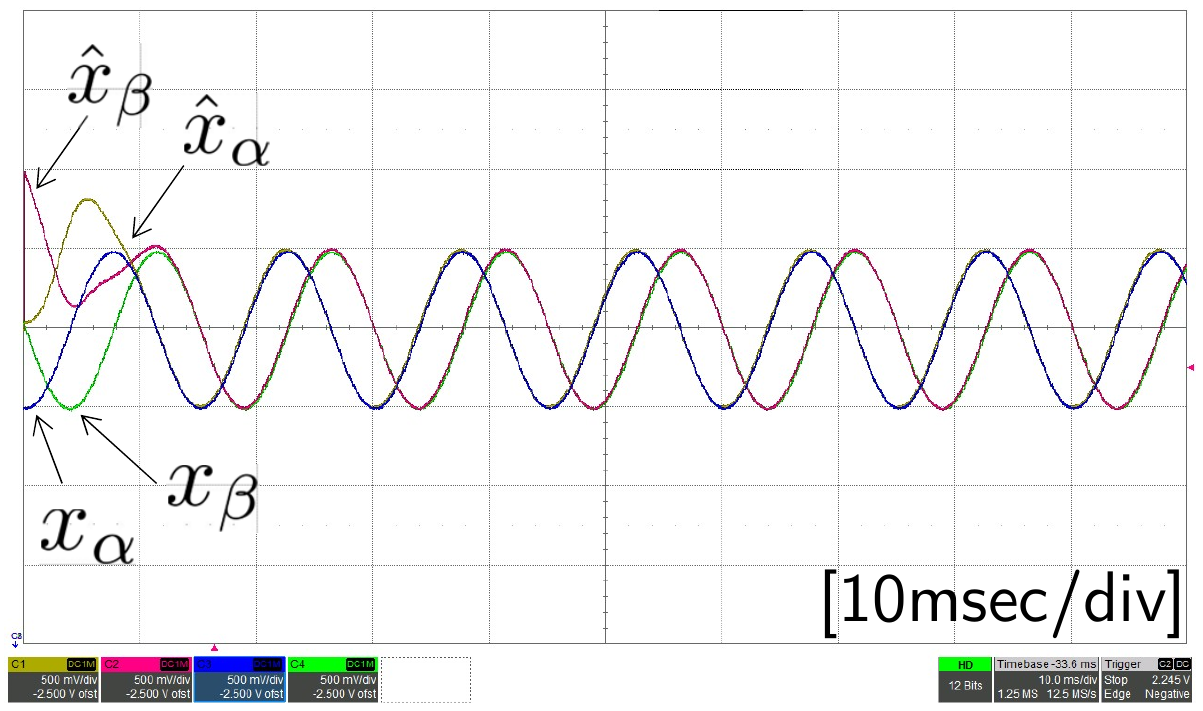}
   \label{fig:5a}
   }
   \subfigure[Estimation of rotor flux with $\gamma = 5$]{
   \includegraphics[width=0.315\textwidth,height=3cm,frame]{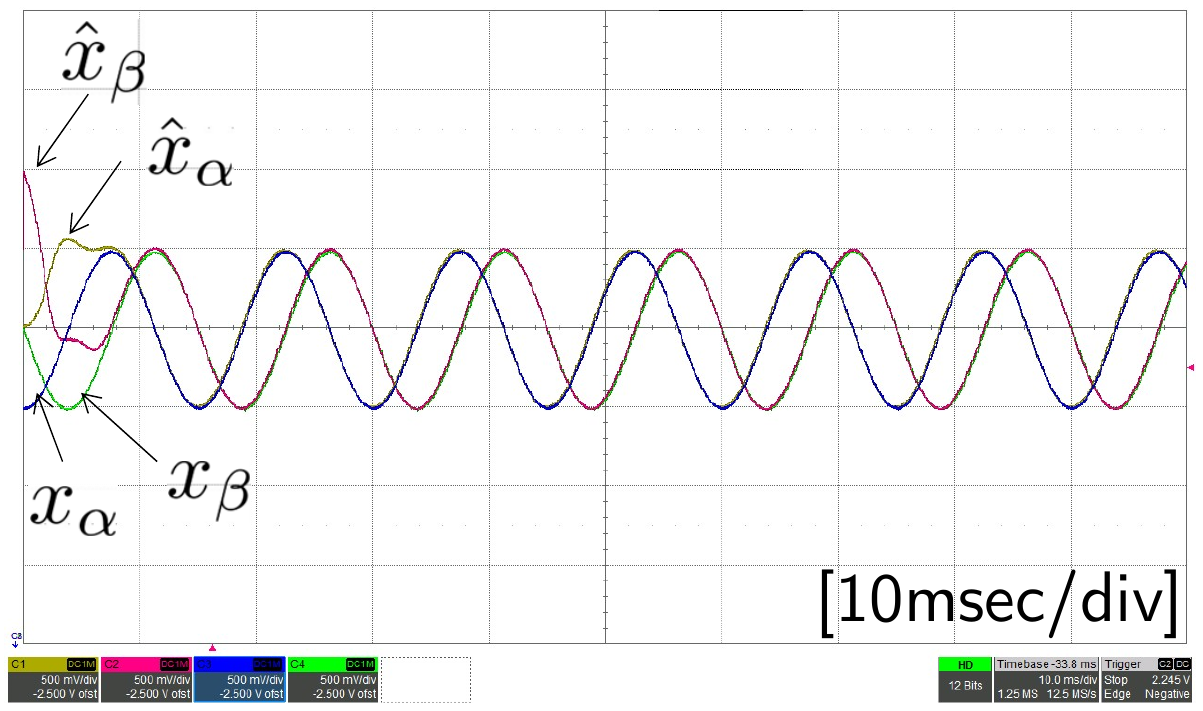}
   \label{fig:5b}
   }
   \subfigure[Estimation of rotor angle with $\gamma = 5$]{
   \includegraphics[width=0.315\textwidth,height=3cm,frame]{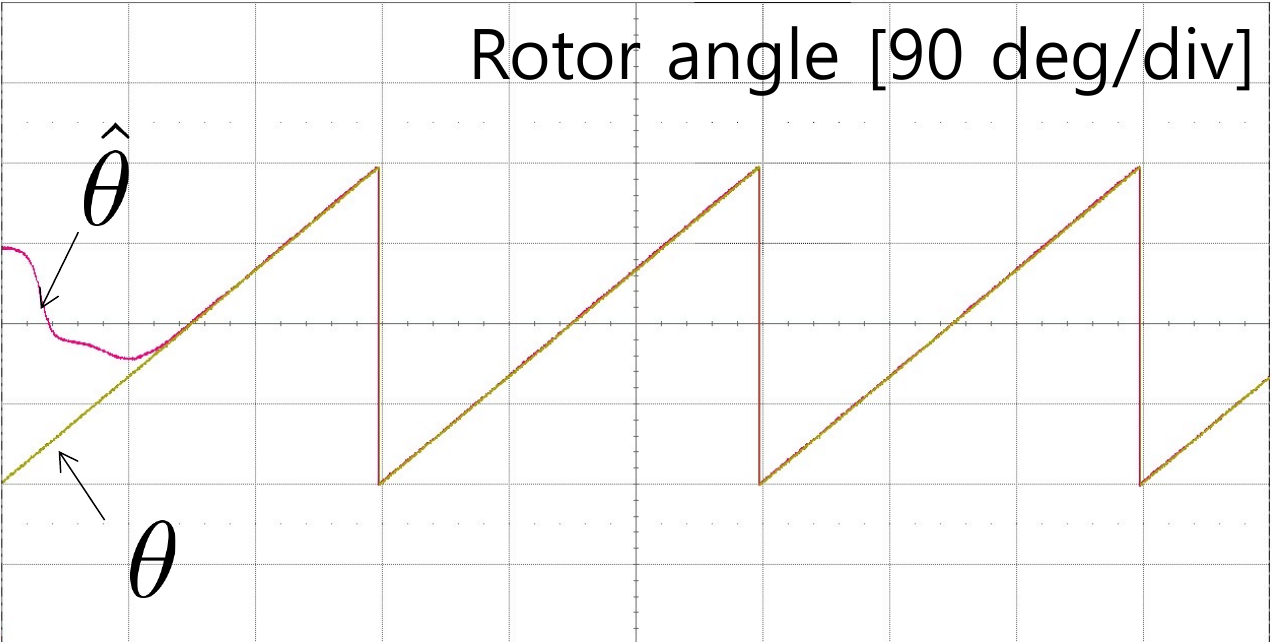}
   \label{fig:5c}
   }
   \caption{Experimental results: Performance of the proposed observer}
    \label{fig:5}
\end{figure*}
%
%
\begin{figure*}[!htp]
   \centering
   \subfigure[Estimation of rotor flux with $\gamma = 1$]{
   \includegraphics[width=0.315\textwidth,height=3cm,frame]{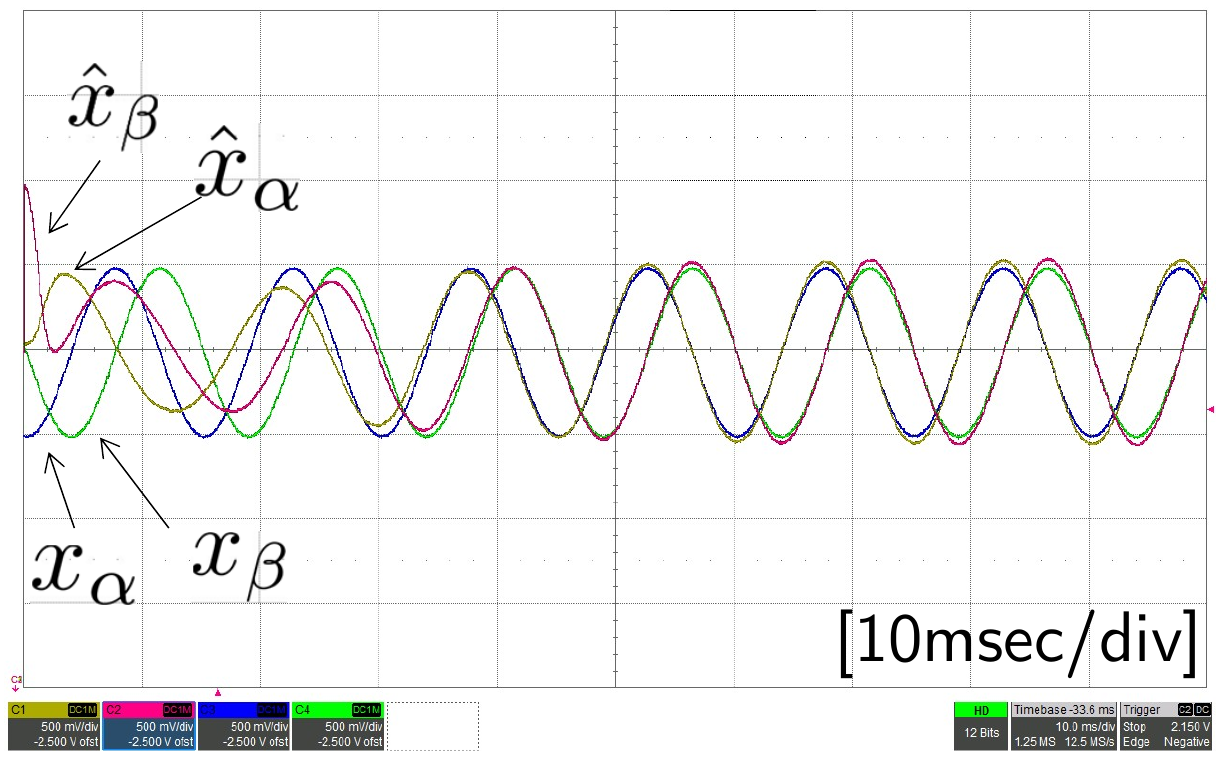}
   \label{fig:6a}
   }
   \subfigure[Estimation of rotor flux with $\gamma = 5$]{
\includegraphics[width=0.315\textwidth,height=3cm,frame]{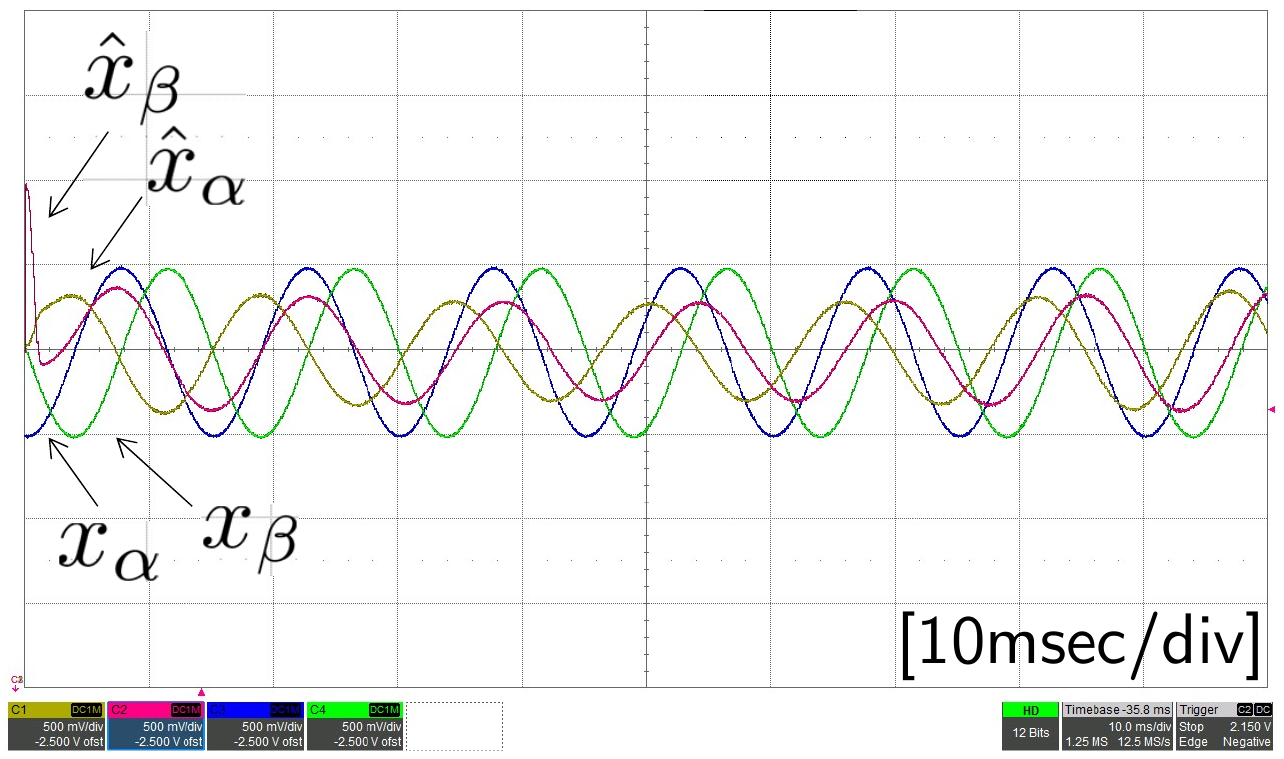}
   \label{fig:6b}
   }
   \subfigure[Estimation of rotor angle with $\gamma = 5$]{
\includegraphics[width=0.315\textwidth,height=3cm,frame]{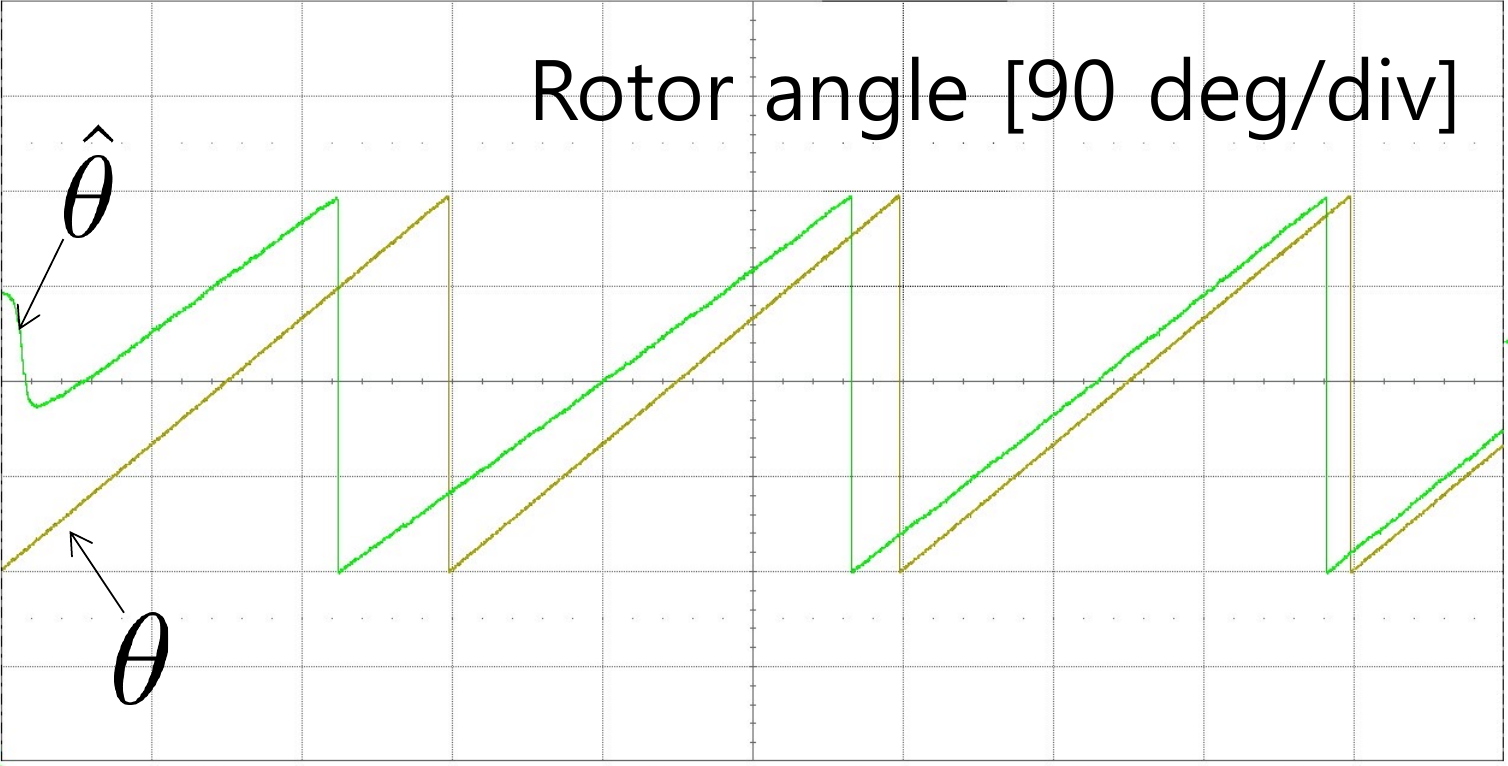}
   \label{fig:6c}
   }
   \caption{Experimental results: Performance of the observer in \cite{ORTetal}}
    \label{fig:6}
\end{figure*}

\subsection{Experiments}

Some experiments have been conducted on the test bench in Fig. \ref{fig:setup}, and the motor parameters have been given in Section \ref{sec:51}. The experimental scenario is that the test motor rotated at 1000 rev/min. The motor started initially without any control. Then, the motor drive started, and the shaft of the motor rotates at 1000 rpm. The rotor flux observer was operated from incorrect initial values. We selected the parameters $\alpha = 20\pi$ and $\alpha = 200\pi$. The inverter was made with Silicon Carbide Six-Pack Power Modules (CCS050M12CM2) and gate driver(CGD15FB45P1). The inverter is controlled by DSP TMS320F28337. The switching frequency was set 5 kHz and the current sampling frequency was 10 kHz.

In Fig. \ref{fig:5} we show the experimental results of the proposed flux and position observer in this paper, illustrating its capability to reconstruct the flux and the angular position as well with high accuracy. Similarly to the simulation part, we tested the observer with different gains $\gamma$, and the result shows the better transient performance when using a high adaption gain $\gamma$. Whereas, the steady-state error performance of the proposed observer does not have a significant difference with respect to the parameter $\gamma$. We also used the same scenario to test the observer in \cite{ORTetal} with experimental results shown in Fig. \ref{fig:6}. From Fig. \ref{fig:6a}, we observe that its convergence time is longer than the one for the new design, and by tuning a larger $\gamma =5$ there are significant ultimate errors of flux and angular estimates with degraded transient behaviour; see Figs. \ref{fig:6b}-\ref{fig:6c}. In this way, we have verified the superior of the proposed observer, which enjoys enhanced transient performance and provable stability.

\section{Conclusion}

In this paper, we revisited the problem of design a GES observer for sensorless control of IPMSMs. Compared to our previous result in \cite{ORTetal}, the new design does not impose a requirement of using a sufficiently small adaptation gain $\gamma>0$, a restriction that may affect the transient performance of the observer. Our motivation in this paper is to remove this restriction to improve the convergence rate of the observer. 

The success of the new design relies on the introduction in the proof of a virtual invariant manifold that is instrumental for the inclusion of the KRE estimator.  It is interesting to note that the proof of our new design is significantly simpler than the one given in \cite{ORTetal}. We have also conducted comprehensive simulations and real-world experiments to validate our theoretical development and its practical usefulness.



\begin{thebibliography}{aa}

\bibitem{ARAetal}
S. Aranovskiy, R. Ushirobira, M. Korotina and V. Vedyakov, On preserving-excitation properties of Kreisselmeiers regressor extension scheme, \TAC, (to appear) 2022. {\tt (hal-03245139/document})


\bibitem{BER}
P. Bernard, {\em Observer Design for Nonlinear Systems}, Springer, Switzerland, 2019.

\bibitem{BERPRA}
P. Bernard and L. Praly, Convergence of gradient observer for rotor position and magnet flux estimation of permanent magnet synchronous motors, {\em Automatica}, vol. 94, pp. 88--93, 2018.

\bibitem{BERPRAtac}
P. Bernard and L. Praly, Estimation of position and resistance of a sensorless PMSM: A nonlinear Luenberger approach for a nonobservable system, \TAC, vol. 66, pp. 481--496, 2020.

\bibitem{BOBetal}
A. Bobtsov, A. Pyrkin, R. Ortega, S. Vukosavic, A.M. Stankovic and E.V. Panteley, A robust globally convergent position observer for
the permanent magnet synchronous motor, {\em Automatica}, vol. 61, pp. 47--54, 2015.

\bibitem{BOLetal}
I. Boldea, M.C. Paicu, G.-D. Andreescu, Active flux concept for motion-sensorless unified AC drives, \emph{IEEE Trans. Power Syst.}, vol. 23, pp. 2612--2618, 2008.

\bibitem{CHOetal}
J. Choi, K. Nam, A.A. Bobtsov and R. Ortega, Sensorless control of IPMSM based on regression model, {\em IEEE Trans. Power Electron.}, vol. 34, no. 9, pp. 9191--9201, 2019.

\bibitem{CHONAM}
J. Choi and K. Nam, Model-based sensorless control for IPMSM providing seamless transition to signal injection method, {\em Tech. Report}, 2019.

\bibitem{HOL}
J. Holtz, Sensorless control of induction motor drives, {\em Proc. IEEE}, vol. 90, pp. 1359--1394, 2002.


\bibitem{KRE}
G. Kreisselmeier, Adaptive observers with exponential rate of convergence, \TAC, vol. 22, pp. 2-8, 1977.

\bibitem{LEEetal} 
J. Lee, J. Hong, K. Nam, R. Ortega, A. Astolfi and L. Praly, Sensorless control of surface-mount permanent magnet synchronous motors based
on a nonlinear observer, {\em IEEE Trans. Power Electron.},  vol. 25, no. 2, pp. 290--297, 2010.

\bibitem{Li}
P.Y. Li, Self-sensing dual push-pull solenoids using a finite dimension flux-observer, \ACC, pp. 590--595, Denver, USA, July 1-3, 2020.

\bibitem{LIO}
P.M. Lion, Rapid identification of linear and nonlinear systems, {\em AIAA Journal}, vol. 5, pp. 1835-1842, 1967.

\bibitem{MALPRA}
J. Malaiz\'e and L. Praly, Robust position estimation for permanent magnet
synchronous electrical machines with salient poles, Archive ouverte HAL, 2020. (\texttt{hal-02568844}).


\bibitem{MALetal}
J. Malaiz\'e, L. Praly and N. Henwood, Globally convergent nonlinear observer for the sensorless control of surface-mount permanent magnet synchronous machines, {\em IEEE Conf. Decis. Control}, pp. 5900--5905, 2012.


\bibitem{NAM}
K. Nam, {\em AC Motor Control and Electric Vehicle Applications}, CRC Press, 2017.


\bibitem{ORTetalTCST}
R. Ortega, L. Praly, A. Astolfi, J. Lee and K. Nam, Estimation of rotor position and speed of permanent magnet synchronous motors with guaranteed stability, vol. 19, pp. 601--614, 2010.


\bibitem{ORTetal}
R. Ortega, B. Yi, S. Vukosavic, K. Nam and J. Choi, A globally exponentially stable position observer for interior permanent magnet synchronous motors, \AUT, vol. 125, Art. no. 109424, 2021.

\bibitem{ORTetalARC}
R. Ortega, V. Nikiforov and D. Gerasimov, On modified parameter estimators for identification and adaptive control: A unified framework and some new schemes, {\em Annu. Rev. Control}, vol. 50, pp. 278--293, 2020.

\bibitem{ORTetalTAC}
R. Ortega, S. Aranovskiy, A. Pyrkin, A Astolfi and A. Bobtsov, New results on parameter estimation via dynamic regressor extension and mixing: Continuous and discrete-time cases, \TAC, vol. 66, no. 5, pp. 2265-2272, 2021.

\bibitem{POUPRAORT} 
F. Poulain, L. Praly and R. Ortega, An observer for permanent magnet synchronous motors with application to sensorless control, {\em IEEE Conf. Decis. Control}, Cancun, Mexico, 9-11 December, 2008.

\bibitem{VERetal}
C.M. Verrelli, E. Carfagna, M. Frigieri, A.S. Crinto and E. Lorenzani, A new Bernard-Praly-like observer for sensorless IPMSMs, \AUT, vol. 140, Art. no. 110266, 2022.

\bibitem{YIORT}
B. Yi and R. Ortega, Conditions for convergence of dynamic regressor extension and mixing parameter estimators using LTI filters, \TAC, 2022. {\tt (DOI: 10.1109/TAC.2022.3149964)}

\bibitem{YIetal}
B. Yi, S.N. Vukosavi{\'c}, R. Ortega, A.M. Stankovi{\'c} and W. Zhang, A new signal injection-based method for estimation of position in interior permanent magnet synchronous motor, vol. 13, no. 9, pp. 1865--1874, {\em IET Power Electron.}, 2020.

\bibitem{IFAC}
B. Yi, R. Ortega, J. Choi and K. Nam, A novel globally exponentially stable observer for sensorless control of the IPMSM via Kreisselmeier's extension, submitted to {\em IFAC World Congress}, 2023.


\end{thebibliography}
\end{document}